%% file: root.tex
\documentclass{ifacconf}

\usepackage{graphicx}      
\usepackage{natbib}        

\input{command.tex}
\begin{document}
\begin{frontmatter}

\title{An Individual-Delay-Reflected Generalized Consensus Analysis for Multi-Agent Systems with Heterogeneous Time-Varying Delays} 

\author[First]{Hye Jin Lee\thanksref{equal}} 
\author[Second]{Ho Sub Lee\thanksref{equal}} 
\author[First]{PooGyeon Park\thanksref{cor}}
\thanks[cor]{Corresponding author}
\thanks[equal]{These authors contributed equally to this work.}
\address[First]{Department of Electrical Engineering,
	POSTECH, Pohang, Korea, (e-mail: hyejin2@postech.ac.kr, ppg@postech.ac.kr)}
\address[Second]{Samsung Electronics, Suwon, Korea (e-mail: hoss.lee@samsung.com)}

\copyright\ 2026 the authors. This work has been accepted to IFAC for publication under a Creative Commons Licence CC-BY-NC-ND.

\begin{abstract}                
In multi-agent systems, heterogeneous time delays exist for all agents because of the difference in communication environments. Therefore, the consensus analysis of a system considering a homogeneous time-varying delay among all agents results in conservatism. In this study, an individual-delay-reflected generalized consensus is proposed for multi-agent systems with heterogeneous time-varying delays with various bounds. 
To reflect heterogeneous time-varying delays, the proposed Lyapunov–Krasovskii functional is constructed by dividing the integral term into intervals containing heterogeneous delays and considering augmented vectors with delay states and integral states. Furthermore, by adding zero equality conditions, conservatism is reduced. $N$-dependent generalized integral inequality is used to allow the user to adjust the computational complexity. Numerical examples demonstrate a reduction in conservatism with the proposed consensus criterion.
\end{abstract}

\begin{keyword}
Consensus analysis, heterogeneous time delay, multi-agent system
\end{keyword}

\end{frontmatter}

\section{Introduction}
\input{txts/sec1}
\section{Problem Statement}
\input{txts/sec2}

\section{Main Result}
\input{txts/sec3}
\section{Numerical Example}
\input{txts/sec4}

\section{Conclusion}
In this study, an individual-delay-reflected consensus analysis is proposed for MASs with heterogeneous 
time-varying delays. Unlike existing works, this study considers heterogeneous time-varying delays with distinct 
minimum and maximum bounds for each agent. The proposed consensus criterion accounts for these delays by dividing 
the integral term in the LKF. Additionally, by constructing the LKF with individual Lyapunov matrices for each agent, 
incorporating augmented states that include both delays and integral states, and applying zero-equality conditions, 
conservatism is reduced. The consensus analysis also employs an $N$-dependent generalized integral inequality, allowing 
for flexibility in adjusting the computational complexity by tuning $N$. Numerical examples demonstrate improvement 
in the proposed consensus criterion to reduce conservatism. 
The proposed framework provides a quantitative criterion for determining the maximum tolerable delay in networked 
multi-agent systems, offering a practical guideline for delay-aware consensus design.

\begin{ack}
This work was supported by the National Research Foundation of Korea (NRF) grant funded by the Korea government (MSIT) (RS-2025-00513307 and RS-2025-16065922).
\end{ack}

\bibliography{reference}            
                                                                 
\end{document}

%% file: command.tex
\usepackage{graphics}
\usepackage{graphicx}
\usepackage{lipsum} 
\usepackage{amsmath}
\usepackage{amssymb}  
\usepackage{array,multirow}

\newtheorem{theorem}{Theorem}
\newtheorem{lemma}{Lemma}
\newtheorem{corollary}{Corollary}
\newtheorem{remark}{Remark}
\newtheorem{example}{Example}

\newcommand{\beq}{\begin{equation}}
\newcommand{\eeq}{\end{equation}}
\newcommand{\beqn}{\begin{equation*}}
\newcommand{\eeqn}{\end{equation*}}
\newcommand{\beqset}{\begin{equation}\begin{aligned}}
\newcommand{\eeqset}{\end{aligned}\end{equation}}
\newcommand{\beqnset}{\begin{equation*}\begin{aligned}}
\newcommand{\eeqnset}{\end{aligned}\end{equation*}}
\newcommand{\bag}{\begin{align}}
\newcommand{\eag}{\end{align}}
\newcommand{\bagn}{\begin{align*}}
\newcommand{\eagn}{\end{align*}}
\newcommand{\baged}{\begin{aligned}}
\newcommand{\eaged}{\end{aligned}}
\newcommand{\bb}{\begin{bmatrix}}
\newcommand{\eb}{\end{bmatrix}}
\newcommand{\barr}{\left[\begin{array}}
\newcommand{\earr}{\end{array}\right]}

%% file: txts/sec1.tex
Over the past few years, multi-agent systems (MASs) have attracted considerable attention due to their applications in fields such as unmanned aerial vehicles (UAVs), unmanned surface vehicles (USVs), electric vehicles, and mobile robots \cite{afrazi2025enhanced,lee2025graph,gao2022safety,hua2025multi,li2024multi}. 
In MAS research, the consensus problem that designs protocols that ensure all agents achieve agreement is a critical research topic that has motivated extensive studies from various perspectives, including research on hybrid MASs \cite{li2024hybrid, feng2024adaptive}, nonlinear MASs \cite{zhao2024observer, xue2023input, afrazi2025enhanced, yang2024nonsingular}, event triggered consensus \cite{hu2024event, ye2025adaptive, xu2024event}, and sampled data consensus \cite{wu2024sampled, zeng2023weighted}.

In practice, time delays are widely observed during communications among agents, which typically cause poor performance, oscillation, or even instability of the system.
Thus, extensive studies have been conducted on the consensus problem in MASs with various time delays including fixed, homogeneous, and heterogeneous delays \cite{vu2024matrix, wen2024output}.
The primary challenge in a time delay system is the stability analysis to determine the maximum allowable delay.
For instance, \cite{liu2021pseudo} establishes a pseudo-predictor feedback protocol for MASs with fixed homogeneous time delays.
A hybrid impulsive control protocol for nonlinear fuzzy MASs with fixed homogeneous time delays is presented in \cite{you2023control}.
Additionally, consensus protocols for linear MASs with fixed input and output time delays are studied in \cite{zhang2023leader}.
However, real systems are influenced by dynamic environmental conditions such as signal interference and network congestion and fixed delays cannot adequately capture these effects. 
Consequently, existing studies \cite{tang2023edge, li2023delay} have focused on consensus for MASs with homogeneous time-varying delays for all agents. 
This assumption entails a significant limitation because it implies that delays for all agents change uniformly and simultaneously. 
Therefore, to enhance the realism of the analysis, it is essential to consider heterogeneous delays. Thus, this study focuses on heterogeneous delays.

Heterogeneous time delays refer to the presence of individual delays for each agent, arising from differences in the communication environment (e.g., the state of the medium and distance) and variations in the processing speeds of individual agents.
Several studies have addressed the consensus problem in MASs under heterogeneous time-varying delays. For example, \cite{savino2015conditions} presents conditions for consensus in MASs with heterogeneous delays and switching topology, 
while \cite{deng2020dynamic} investigates an event-triggered approach under similar delay settings. In addition, \cite{jiang2022multi} examines consensus for MASs with heterogeneous time-varying input and communication delays. 
Notably, these studies assume that all delays have a common bound, which does not fully account for heterogeneous time-varying delays. 

Motivated by the aforementioned factors, this study proposes a generalized consensus analysis for MASs with heterogeneous time-varying delays, considering different bounds for each delay to ensure a higher allowable delay.  
As a result, it addresses a more realistic problem for scenarios that better reflect practical conditions and follows the general form of previous heterogeneous studies.  
In particular, the Lyapunov-Krasovskii functional (LKF) is constructed using the delay partitioning method and individual Lyapunov matrices to account for the specific delay of each agent.
Furthermore, an augmented state incorporating delay and integral states with heterogeneous time-varying delays is introduced to account for the correlated effects, and an $N$-dependent generalized integral inequality is employed.

$\mbox{~~} Notations:$ 
Throughout this paper,  $P>0$ denotes that $P$ is a positive definite matrix; $1_n$, $0_n$, and $I_n$ indicate n-dimensional column vectors of ones, zeros, and the $n \times n$ identity matrix, respectively; 
$e_i$ denotes a column vector with 1 at the $i$-th element and 0 elsewhere.
${\bf Sym}(X)$ denotes that $X+X^T$; $\otimes$ means the Kronecker product; 
The binomial coefficient is represented as ${p \choose q}={p!\over q!(p-q)!}$; 

%% file: txts/sec2.tex
\subsection{Graph theory}
Let $\mathcal{G}(\mathcal{V},\mathcal{E},\mathcal{A})$ be a directed graph, where $\mathcal{V}=\{v_1,\cdots,v_N\}$ is a set of $N$ vertices, $\mathcal{E}$ is a set of edges whose element pair $e_{ij}=(v_i,v_j)$ belongs to the set $\mathcal{E}$, and $\mathcal{A}=[a_{ij}]_{N \times N}$ is an adjacency matrix. If $e_{ij}$ is not zero, $a_{ij}=1$, and $a_{ij}=0$ otherwise. The Laplacian matrix is defined as $L = [l_{ij}]_{N \times N}$, where element $l_{ij} = -a_{ij}$ if $i \neq j$ and $l_{ii}=-\sum_{j=1,~j\neq i}^N l_{ij}$. 

\subsection{System description}
Let us consider the $k$th agent system with heterogeneous time-varying delays.
\beq\label{xi_sys}
\baged
    \dot x_k(t) &= Ax_k(t)+Bu_k(t-\tau_{k}(t)),  \\
    u_k(t)  &= -\sum_{l=1,l \neq k}^m a_{kl}K(x_k(t)-x_l(t)),
\eaged\eeq
where $x_k(t) \in \mathbb R^n$ and $u_k(t) \in \mathbb R^r$ denote the state vector and control input vector of the $k$th agent for $k \in\{1,\cdots,m\}$, respectively, and $m$ indicates the total number of agents. 
Here, $\tau_{k}(t)$ represents the heterogeneous time-varying delay associated with the $k$th agent and satisfies $0\leq\tau_k^l\leq\tau_{k}(t)\leq\tau_k^u$, where $\tau_k^l$ and $\tau_k^u$ denote the lower and upper bounds of the delay for the $k$th agent, respectively. 

By using (\ref{xi_sys}), the MAS with heterogeneous time-varying delays can be expressed into the following equation:
\beq\label{xdot}
    \dot x(t) = (I_m \otimes A) x(t)-\sum_{k=1}^m (E_kL\otimes BK) x(t-\tau_{k}(t)), 
\eeq
where $x(t)=\left[x_1^T(t), x_2^T(t),\cdots, x_m^T(t) \right]^T$ and $E_k\in\mathbb{R}^{m \times m}$ denotes a zero matrix except one at the $(k,k)$ entry.
Considering the consensus problem using the tree-type transformation \cite{sun2009consensus}, the difference between state variables $z_k(t)$ is expressed as follows:
\beqn
    z_k(t)=x_1(t)-x_{k+1}(t) ~~\forall k \in \{1,2,\cdots,m-1\}, 
\eeqn

Let $G = [1_{m-1} ~~-I_{m-1}]$ and $U=[0_{m-1}~~-I_{m-1}]^T$, which provides the following expression:
\beqn\baged
    \dot z(t)=& (I_{m-1}\otimes A)-\sum_{k=1}^m\bigg(GE_kLU\otimes BK\bigg)z(t-\tau_k(t)).
\eaged\eeqn

\begin{lemma}\label{lem1} (Alternative expression of generalized integral inequality on \cite{lee2020novel}) 
    Given $\zeta(t) \in \mathbb{R}^{mn}$, $N\in \mathbb{N}$, and a continuous and differentiable function $z(s) \in \mathbb{R}^n$ over the interval $[a, b]$, the following integral inequality holds for the positive definite matrix $R \in \mathbb{R}^{2n \times 2n}$, free matrices $Y_{i} \in R^{mn\times n} (i = 1, 2,\cdots,2N+1)$, and $\xi(s) =\bb \dot z^T (s) & z^T (s) \eb^T$:

     \small\beqn\baged
         -\int_{a}^{b}\xi^T(s) R \xi(s) ds \leq \zeta^T \bigg[(b-a)\bar Y \bar R^{-1}\bar Y^T+\textbf{Sym}\{\bar Y \bar M^T\}\bigg]\zeta,
     \eaged\eeqn\normalsize
     where  
     \beqn\baged
         &\bar Y = \bb Y_1&Y_2&\cdots & Y_{2N+1}\!&\!0\eb, \;
         \bar M = \bb M_1&M_2&\cdots& M_{2N+1}&0\eb,\\
         &\bar R=diag\{R,\;3R,\;\cdots,\;(2N+1)R\}, \\
         &M_{2j-1}\zeta(t)=	\int_a^b p_{j-1}(s)\dot z(s)ds, \\ 
         &M_{2j}\zeta(t) =\int_a^b p_{j-1}(s)z(s)ds,\\
         &p_j(s)=(-1)^{j}\sum_{i=0}^j \bigg[(-1)^i{j \choose i}{j+i \choose i}\bigg]\bigg({s-a\over b-a}\bigg)^i.
     \eaged\eeqn     
\end{lemma}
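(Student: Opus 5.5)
The plan is to combine the Legendre-polynomial expansion of $\xi(s)$ on $[a,b]$ with a completion-of-squares bound on each projection coefficient. The shifted Legendre polynomials $p_j$ defined in the statement are orthogonal on $[a,b]$:
\[
\int_a^b p_i(s)p_j(s)\,ds=\frac{b-a}{2j+1}\,\delta_{ij}.
\]
Define the residual
\[
\tilde\xi(s)=\xi(s)-\sum_{j=0}^{N}\frac{2j+1}{b-a}\,p_j(s)\,\Omega_j,
\qquad
\Omega_j=\int_a^b p_j(s)\xi(s)\,ds .
\]
Orthogonality gives
\[
\int_a^b\xi^T R\xi\,ds\ \ge\ \sum_{j=0}^{N}\frac{2j+1}{b-a}\,\Omega_j^T R\,\Omega_j,
\]
because the cross terms vanish and $\int_a^b\tilde\xi^T R\tilde\xi\,ds\ge 0$. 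This is the generalized (Bessel–Legendre type) integral inequality of \cite{lee2020novel}, and I would take it as the starting point. Note that $\Omega_{j-1}$ stacks exactly the quantities $M_{2j-1}\zeta$ and $M_{2j}\zeta$, since $\xi=[\dot z^T\; z^T]^T$.

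The second step removes the inverse weight in the quadratic form by introducing free matrices. For any $\Omega$, any $X\in\mathbb{R}^{mn\times 2n}$ and any $c>0$,
\[
-\,c\,\Omega^T R\,\Omega\ \le\ \frac{1}{c}\,\zeta^T X R^{-1}X^T\zeta+\textbf{Sym}\{\zeta^T X\Omega\}.
\]
This follows by expanding
\[
\big(c\,R\,\Omega+X^T\zeta\big)^T (cR)^{-1}\big(c\,R\,\Omega+X^T\zeta\big)\ \ge\ 0.
\]
Apply it for each $j$ with $c=(2j+1)/(b-a)$, which gives the weight
\[
\frac{b-a}{2j+1}\,X R^{-1}X^T=(b-a)\,X\big((2j+1)R\big)^{-1}X^T.
\]
Take $X=[Y_{2j+1}\;\;Y_{2j+2}]$, split as the blocks acting on the $\dot z$-part and the $z$-part of $\Omega_j$. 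Summing over $j$ and writing the result blockwise produces
\[
\zeta^T\big[(b-a)\bar Y\bar R^{-1}\bar Y^T+\textbf{Sym}\{\bar Y\bar M^T\}\bigg]\zeta,
\]
with $\bar R=\mathrm{diag}\{R,3R,\dots\}$.

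The main obstacle is bookkeeping: matching the index ranges and the block partition to the statement. As written, $\bar Y$ has $2N+1$ blocks of width $n$ followed by a zero block, while $\bar R$ is block diagonal in $2n\times 2n$ blocks. So I would pair consecutive $Y$'s into $2n$-column blocks, $[Y_{2j-1}\;Y_{2j}]$ for $j=1,\dots,N+1$, with the trailing zero padding $Y_{2N+2}$. Setting a free block to zero is harmless, since the inequality holds for every choice of free matrices. I would also interpret the products as $\bar Y\bar M^T\zeta$ in the sense $\zeta^T\bar Y(\bar M\zeta)$, with $\bar M_i\zeta$ given by the integrals stated.

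Finally, choosing the optimal
\[
X=-\,\frac{2j+1}{b-a}\,\zeta^{-T}\text{-free form}
\]
(that is, $X^T\zeta=-cR\Omega$) shows that the bound is tight, recovering the original inequality. This confirms the statement is an equivalent "alternative expression", which is the content to be proved.
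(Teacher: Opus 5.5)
Your argument is correct. The paper gives no proof of Lemma~\ref{lem1}; it only imports the inequality from \cite{lee2020novel}. Your derivation therefore supplies the missing self-contained justification. It has two steps: the Bessel--Legendre projection bound $\int_a^b\xi^TR\xi\,ds\ge\sum_{j=0}^{N}\frac{2j+1}{b-a}\Omega_j^TR\Omega_j$, and then, for each $j$, the completion of squares $-c\,\Omega_j^TR\Omega_j\le\frac1c\zeta^TX_jR^{-1}X_j^T\zeta+2\zeta^TX_j\Omega_j$ with $c=\frac{2j+1}{b-a}$ and $X_j=[Y_{2j+1}\;Y_{2j+2}]$.

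Your handling of the statement's loose notation is also right. $M_i\zeta$ must be read as $M_i^T\zeta$ with $M_i\in\mathbb{R}^{mn\times n}$, and forcing $Y_{2N+2}=0$ is harmless. In fact this is the reason for the zero padding. With $X_N=[Y_{2N+1}\;0]$, the cross term is $2\zeta^TY_{2N+1}\int_a^b p_N\dot z\,ds$. So $\int_a^b p_N z\,ds$, which the augmented vector of Theorem~\ref{thm} does not contain, is never needed.

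You should drop your last paragraph, which is not required and is partly wrong.
\begin{itemize}
\item The expression involving $\zeta^{-T}$ is meaningless. The correct $\zeta$-independent minimizer for an unconstrained block is $X_j=-\frac{2j+1}{b-a}[M_{2j+1}\;M_{2j+2}]R$, which makes the $j$-th bound tight.
\item The tightness and equivalence claim fails for the lemma as stated. For the padded last block, minimizing over $Y_{2N+1}$ gives only the quadratic form $-\frac{2N+1}{b-a}\zeta^TM_{2N+1}(R_a-R_bR_c^{-1}R_b^T)M_{2N+1}^T\zeta$. Here $R_a,R_b,R_c$ are the $(1,1)$, $(1,2)$, $(2,2)$ $n\times n$ blocks of $R$.
\item This Schur-complement term is in general a larger, hence weaker, upper bound than $-\frac{2N+1}{b-a}\Omega_N^TR\Omega_N$. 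So the stated lemma is a relaxation of the $N$-th order Bessel--Legendre inequality, not an equivalent reformulation.
\end{itemize}
Since the lemma only claims the inequality, this does not affect the validity of your proof.
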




%% file: txts/sec3.tex
Throughout the paper, we use the compact notation
    \beqnset
         &\bar A:=I_{m-1}\otimes A, \quad \bar B_k:=GE_kLU\otimes (BK), \\
         & F=n(m-1), \quad f=\{1,2,\cdots,1+(3+3N)m\}, \\
         &\Tilde{\tau}_k:=\tau_k^u-\tau_k^l,\quad  \forall k\in\{1,\cdots,m\}. \\
    \eeqnset 
The closed-loop dynamics of the error state are 
\beq \label{eq:closed-loop dynamics}
    \dot z (t) = \bar Az(t)-\sum_{k=1}^m \bar B_k z(t-\tau_k(t)).
\eeq

Before deriving the main result, an augmented state is defined as follows:\\
$\zeta(t) = \left[\zeta_1^T(t),\zeta_2^T(t),\zeta_3^T(t),\zeta_4^T(t),\zeta_5^T(t)\right]^T$, where

\begin{align*}
	\zeta_1(t)&=[ z(t)^T \; z(t-\tau_1^l)^T \; \cdots \; z(t-\tau_m^l) \\ 
	&\hspace{12em} \; z(t-\tau_1^u) \; \cdots \; z(t-\tau_m^u) ], \\
    \zeta_2(t)&=[ z(t-\tau_1(t)) \; \cdots \; z(t-\tau_m(t)) ], \\
\end{align*}
\begin{align*}
	\zeta_3(t)\!&=\!\bb{1 \over \tau_1^l}\int_{t-\tau^l_1}^{t} z(s_1)ds_1\\\vdots \\{1 \over \tau_m^l}\int_{t-\tau_m^l}^{t} z(s_1)ds_1\\\vdots\\{(N-1)! \over (\tau^l_1)^N}\int_{t-\tau^l_1}^{t}\cdots \int_{r}^{t} z(s_1)ds_1\cdots ds_N\\\vdots\\{(N-1)! \over (\tau_m^l)^N}\int_{t-\tau_m^l}^{t}\cdots \int_{r}^{t} z(s_1)ds_1\cdots ds_N \eb, \\
	\zeta_4(t)&\!=\!\bb\! {1 \over (\tau_1(t)-\tau^l_1)}\int_{t-\tau_1(t)}^{t-\tau^l_1} z(s_1)ds_1 \\ \vdots \\  {1 \over (\tau_m(t)-\tau_m^l)}\int_{t-\tau_m(t)}^{t-\tau_m^l} z(s)ds\\ \vdots\\ {(N-1)\over (\tau_1(t)-\tau^l_1)^{N}}\int_{t-\tau^l_1(t)}^{t-\tau^l_1}\cdots \int_{r}^{t-\tau^l_1} z(s_1)ds_1 \cdots ds_N\\
	\vdots\\ {(N-1) \over (\tau_m(t)-\tau_m^l)^{N}}\int_{t-\tau_m(t)}^{t-\tau_m^l}\cdots \int_{r}^{t-\tau^l_m} z(s_1)ds_1 \cdots ds_N
	\!\eb, \\
	\zeta_5(t)&\!=\!\bb\! {1 \over (\tau_1^u-\tau_1(t))}\int_{t-\tau_1^u}^{t-\tau_1(t)} z(s_1)ds_1 \\ \vdots \\ {1 \over (\tau_m^u-\tau_m(t))}\int_{t-\tau_m^u}^{t-\tau_m(t)} z(s_1)ds_1\\\vdots\\{(N-1)! \over (\tau_1^u-\tau_1(t))^{N}}\int_{t-\tau_1^u}^{t-\tau_1(t)}\!\cdots\! \int_{r}^{t-\tau_1(t)} \! z(s_1)ds_1 \!\cdots\! ds_N\\\vdots\\ {(N-1)! \over (\tau_m^u-\tau_m(t))^{N}}\int_{t-\tau_m^u}^{t-\tau_m(t)}\!\cdots\! \int_{r}^{t-\tau_m(t)} \! z(s_1)ds_1 \!\cdots\! ds_N\!\eb\!.
\end{align*}
The augmented vector $\zeta(t)$ is constructed to capture the complete delay-related information required for the stability analysis. 
Specifically, $\zeta_3(t)$, $\zeta_4(t)$, and $\xi_5(t)$ consist of iterated integral terms over the sub-intervals $[t-\tau_k^l,t]$, $[t-\tau(k),t-\tau_k^l]$, and $[t-\tau(k),t-\tau_k^u]$, respectively. It arise from the application of Lemma~\ref{lem1} and contribute to reducing conservatism in the stability conditions.

Here, $e_f=\bb 0_{F \times F(f-1)} & I_{F} & 0_{F \times F(1+(3+3N)m-f)}\eb^T$ is defined such that $e_f^T\zeta(t)$ gives the $f$-th block of $\zeta(t)$. For example, $z(t)=e_1^T\zeta(t)$, $z(t-\tau_k^l)=e_{1+k}^T\zeta(t)$, and $z(t-\tau_k^u)=e_{1+m+k}^T\zeta(t)$.

The primary objective of the proposed framework is to characterize the maximum allowable delay bounds $\tau_k^l$ and $\tau_k^u$ that guarantee consensus for a given controller gain $K$. The following theorem presents the main result.

\begin{theorem} \label{thm}
	The multi-agent system in \eqref{xdot} with heterogeneous time delays $\tau_{k}(t)$ achieves the consensus if there exist $P_{k} \in \mathbb R^{3F\times3F}>0,~Q_{1k},~Q_{2k}\in \mathbb R^{F\times F}>0$, $R_{1k},~R_{2k}\in \mathbb R^{2F\times2F}>0$, $W_{1k},~ W_{2k},~W_{3k}\in \mathbb R^{F\times F}>0$, matrices $Y_{ikj}\in \mathbb R^{(1+(3+3N)m)F\times F}(i=1,2,3,~ j=1,\cdots,2N+1)$, such that the following LMIs hold for all $\tau_{k}(t)\in\{\tau_k^l,\tau_k^u\}$ and for all $k=\{1,\cdots,m\}$
	
	\beq \label{thm_lmi}\baged
	\barr{ccccc}
	\bar \Pi      & \bar X_{1}   & \bar X_{23} \\ 
	(*)         &-\overline {RW}_{1} & 0  \\ 
	(*)         & 0              & -\overline {RW}_{23} \earr&<0,	\\
	R_{1k}+\bar W_{1k}>0, \quad 	R_{2k}+\bar W_{2k}>0, \quad R_{2k}+\bar W_{3k}&>0,
	\eaged\eeq
	
	where
    \begin{align*}
        \bar \Pi\!=&\! \sum_{k=1}^m\bigg\{{\bf Sym}(\Pi_{2k}P_k\Pi_{1k}^T)+e_{1}(Q_{1k}+W_{1k})e_{1}^T\\
        +&e_{1+k}(Q_{2k}-Q_{1k}+W_{2k}-W_{1k})e_{1+k}^T\\
        +&e_{1+2m+k}(W_{3k}-W_{2k})e^T_{1+2m+k}\\
        -&e_{1+m+k}(Q_2+W_{3k})e_{1+m+k}^T
		+\Pi_{3}(\tau_k^l R_{1k}+\tilde\tau_kR_{2k})\Pi_{3}^T\\
		+&{\rm {\bf Sym}}(\bar Y_{1k}\bar M_{1k}^T+\bar Y_{2k}\bar M_{2k}^T+\bar Y_{3k}\bar M_{3k}^T) \bigg\}, \\
		\Pi_{1k}&\!=\![ e_1 ~~ \tau_k^le_{1+3m+k} ~~(\tau_{k}(t)-\tau_k^l)e_{1+(3+N)m+k}\\
        \Pi_{2k}&\!=\![ e_1\bar A^T-\sum_{k=1}^me_{1+2m+k}\bar B_{k}^T \\
		&\hspace{10em}  e_1-e_{1+k} e_{1+k}-e_{1+m+k}], \\
		\Pi_{3}&\!=\!\bb e_1\bar A^T-\sum_{k=1}^me_{1+2m+k}\bar B_{k}^T & e_1\eb, \\
		\bar A &\!=\! I_{m-1}\otimes A, ~~ \bar B_{k}\!=\!GE_kLU\otimes (BK), \\
		G &\!=\! [1_{m-1} ~~-I_{m-1}], ~U=[0_{m-1}~~-I_{m-1}]^T, \\
		\bar W_{1k} &=  \bb  0  &W_{1k} \\  W_{1k} & 0\eb,	\bar W_{2k} =  \bb  0  &W_{2k} \\  W_{2k} & 0\eb, \\
		\bar W_{3k} &=  \bb  0 & W_{3k} \\  W_{3k} & 0\eb, 
		\bar M_x \!=\ \bb M_1^x&M_2^x&\cdots& M_{2N+1}^x&0\eb, \\
		&\hspace{8em}+(\tau_k^u-\tau_{k}(t))\times e_{1+(3+2N)m+k}],\\
        M_{2j-1}^{x} \!&=\! \sum_{i=0}^{j-1}\gamma_{ij}\rho_{i}^{x},~~ M_{2j}^{1k} \!=\! \tau_k^l\sum_{i=0}^{j-1}\gamma_{ij}e_{1+(2+i)m+k},\\
		M_{2j}^{2k} &\!=\! (\tau_k(t)-\tau_k^l)\sum_{i=0}^{j-1}\gamma_{ij}e_{1+(2+i+N)m+k},\\
		M_{2j}^{3k} &\!=\! (\tau_k^u-\tau_k(t))\sum_{i=0}^{j-1}\gamma_{ij}e_{1+(2+i+2N)m+k}, \\
        \gamma_{ij} &\!=\!(-1)^{j-1}\bigg[(-1)^{i}{j-1 \choose i}{j+i-1 \choose i}\bigg], \\
		\rho_i^{1k} &\!=\! \begin{cases}	e_1-e_{1+k}, ~ i=0  \\ e_1-ie_{1+(2+i)m+k}, ~i \geq 1 \end{cases},\\
		\rho_{i}^{2k} &\!=\! \begin{cases}	e_{1+k}-e_{1+2m+k}, ~ i=0  \\ e_{1+k}-ie_{1+(2+i+N)m+k}, ~i \geq 1 \end{cases}, \\
		\rho_{i}^{3k} &\!=\! \begin{cases}	e_{1+2m+k}-e_{1+m+k}, ~ i=0  \\ e_{1+2m+k}-ie_{1+(2+i+2N)m+k}, ~i \geq 1 \end{cases},\\
        \bar X_{1} &\!=\! \bb \tau_1^l \bar Y_{11} &\cdots &\tau_m^l\bar Y_{1m}\eb, \\
		\bar X_{23} &\!=\! [ (\tau_{1}(t)-\tau_1^l)\bar Y_{21}+(\tau_1^u-\tau_{1}(t))\bar Y_{31} \\
		&\hspace{3em}\cdots ~~(\tau_{m}(t)-\tau_m^l)\bar Y_{2m}+(\tau_m^u-\tau_{m}(t))\bar Y_{3m}],\\
		\bar Y_{ik}&\!=\! \bb  Y_{ik1}&Y_{ik2}&\cdots   & Y_{ik(2N+1)}&  0\eb(i=1,2,3),\\	
        \overline {RW}_{1}&\!=\!diag\{\tilde\tau_1\Pi_{41},~ \cdots,~ \tilde\tau_m\Pi_{4m}\},\\
		\overline {RW}_{23}&\!=\!diag\{\tilde\tau_1\Pi_{51},~ \cdots,~ \tilde\tau_m\Pi_{5m}\},\\
        \Pi_{4k}&\!=\!\mathcal{I}_N\otimes(R_{1k}+W_{1k}), \\
		\Pi_{5k}&\!=\! (\tau_{k}(t)-\tau_k^l)\Pi_{6k}+(\tau_k^u-\tau_k(t))\Pi_{7k}, \\
    \end{align*}
	\begin{align*}
		\Pi_{6k}&\!=\!\mathcal{I}_N\otimes(R_{2k}+W_{2k}), \;
		\Pi_{7k}\!=\!\mathcal{I}_N\otimes(R_{2k}+W_{3k}),\\
		\mathcal{I}_N&\!=\!\text{diag}\{1,3,\cdots,(2N+1)\}.
	\end{align*}
	
	\begin{pf}
    \textbf{Step 1: LKF construction:} 
    Consider the Lyapunov-Krasovskii functional (LKF)
     \begin{align*}
        V(t)  &=V_1(t)+V_2(t) +V_3(t), \\
        V_1(t)&= \sum_{k=1}^m\eta_k^T(t) P_k\eta_k(t), \\
        V_2(t) &=\sum_{k=1}^m\int_{t-\tau_k^l}^{t} z^T(s)Q_{1k}z(s)ds,\\
        &\hspace{1em}+\sum_{k=1}^m\int_{t-\tau_k^u}^{t-\tau_k^l}z^T(s)Q_{2k}z(s)ds,\\
        V_3(t)  &=\sum_{k=1}^m\int_{-\tau_k^l}^0 \int_{t+s}^{t} \xi^T(r)R_{1k}\xi(r)dr ds \\
        &\hspace{1em}+\sum_{k=1}^m\int_{-\tau_k^u}^{-\tau_k^l} \int_{t+s}^{t} \xi^T(r)R_{2k}\xi(r)dr ds,
    \end{align*}

		where $\eta_k^T(t)=\barr{ccc}z^T(t)&\int_{t-\tau_k^l}^{t} z^T(s)ds&\int_{t-\tau_k^u}^{t-\tau_k^l} z^T(s)ds\earr$ and $\xi^T(t)= \barr{cc}\dot z^T(t)&z^T(t)\earr$.

    \textbf{Step 2: Derivative of $V_1(t)$:}
    The third component of $\eta_k(t)$ can be split over the interval at $t-\tau_k(t)$ as
    \beqnset
        &(\tau_k(t)-\tau_k^l)\underbrace{\frac{1}{\tau_k(t)-\tau_k^l}\int_{t-\tau_k(t)}^{t-\tau_k^l} z \, ds}_{e_{1+(3+N)m+k}^T\zeta(t)} \\
        &+(\tau_k^u-\tau_k(t))\underbrace{\frac{1}{\tau_k^u-\tau_k(t)} \int_{t-\tau_k^u}^{t-\tau_k(t)} z \, ds}_{e_{1+(3+2N)m+k}^T\zeta(t)},
    \eeqnset
    where the normalized integrals correspond to the first entries of $\zeta_4(t)$ and $\zeta_5(t)$, respectively.
    Together with $z(t)=e_1^T\zeta(t)$ and $\tau_k^l \cdot \frac{1}{\tau_k^l}\int_{t-\tau_k^l}^t z \, ds=\tau_k^l e_{1+2m+k}^T\zeta(t)$, it follows that $\eta_k(t)=\Pi_{1k}^T\zeta(t)$. Differentiating $\eta_k(t)$ via the Leibniz rule gives
    \beqn
        \dot\eta_k(t)=\bb \dot z(t) \\ z(t)-z^T(t-\tau_k^l) \\ z(t-\tau_k^l)-z(t-\tau_k^u)\eb.
    \eeqn
    Substituting the closed-loop dynamics \eqref{eq:closed-loop dynamics} and using $z(t)=e_1^T\zeta(t)$, $z(t-\tau_k^l)=e_{1+k}^T\zeta(t)$, $z(t-\tau_k^u)=e_{1+m+k}^T\zeta(t)$, $z(t-\tau_k(t))=e_{1+2m+k}^T\zeta(t)$, leads to $\dot\eta_k(t)=\Pi_{2k}^T\zeta(t)$.

    Taking the time derivative of $V_1(t)=\sum_{k=1}^m \eta_k^T(t)P_k\eta_k(t)$ gives
    \beq\label{dotv1}
        \dot V_1(t) = \sum_{k=1}^m 2\eta_k^T(t)P_k\dot \eta_k(t)=2\zeta^T(t)\left[\sum_{k=1}^m \Pi_{2k}P_k\Pi_{1k}^T\right]\zeta(t).
    \eeq

    \textbf{Step 3: Derivative of $V_2(t)$:}
    The time derivative of $V_2(t)$ is given by
    \begin{align}
        &\dot V_2(t)=\sum_{k=1}^m [z^T(t)Q_{1k}z(t)-z^T(t-\tau_k^lQ_{1k}z(t-\tau_k^l)], \nonumber \\
                    &+\sum_{k=1}^m [z^T(t-\tau_k^l)Q_{2k}z(t-\tau_k^l)-z^T(t-\tau_k^u)Q_{2k}z(t-\tau_k^u)],  \nonumber \\
                &= \sum_{k=1}^m\zeta^T(t)\bigg\{e_1Q_{1k}e_1^T+e_{1+k}(Q_{2k}-Q_{1k})e_{1+k}^T\\
			&\hspace{12em}-e_{1+m+k}Q_{2k}e_{1+m+k}^T\bigg\}\zeta(t) \nonumber .
    \end{align}\label{dotv2}

    \textbf{Step 4: Derivative of $V_3(t)$:}
    \beqset\label{v3}
        &\dot V_3(t) =\sum_{k=1}^m \xi^T(t)(\tau_k^lR_{1k}+\tilde \tau_kR_{2k})\xi(t) \\
        &- \sum_{k=1}^m\int_{t-\tau_k^l}^t \xi^T(s)R_{1k}\xi(s)ds -\sum_{k=1}^m \int_{t-\tau_k^u}^{t-\tau_k^l} \xi^T(s)R_{2k}\xi(s) ds.
    \eeqset

    The last term of \eqref{v3} can be separated into two parts.
    \begin{align}\label{v3_term}
			 \int_{t-\tau_{k}(t)}^{t-\tau_k^l} \xi^T(s) R_{2k}\xi(s) ds+\int_{t-\tau_k^u}^{t-\tau_{k}(t)} \xi^T(s)R_{2k}\xi(s) ds.
    \end{align}

    To incorporate the weighting matrices $W_{1K}$, $W_{2k}$, $W_{3k}$, the following zero equalities are considered.
    \begin{align}
			0&=\sum_{k=1}^m\bigg(z^T(t)W_{1k}z(t)-z^T(t-\tau_k^l)W_{1k}z(t-\tau_k^l)  \nonumber \\
			&\hspace{1em}-2\int_{t-\tau_k^l}^tz^T(s)W_{1k}\dot z(s)ds\bigg),  \label{zero_1} \\
			0&=\sum_{k=1}^m \bigg(z^T(t-\tau_k^l)W_{2k}z(t-\tau_k^l)-z^T(t-\tau_k(t))W_{2k} \nonumber \\
			&\hspace{1em}\times z(t-\tau_k(t))-2\int_{t-\tau_k(t)}^{t-\tau_k^l}z^T(s)W_{2k}\dot z(s)ds\bigg),  \label{zero_2}\\
			0&=\sum_{k=1}^m \bigg(z^T(t-\tau_k(t))W_{3k}z(t-\tau_k(t)) \label{zero_3} \\
			-&z^T(t-\tau_k^u)W_{3k}z(t-\tau_k^u) 
			-2\int_{t-\tau_k^u}^{t-\tau_k(t)}z^T(s)W_{3k}\dot z(s)ds\bigg). \nonumber
		\end{align}
        By summing the zero equalities \eqref{zero_1}--\eqref{zero_3}, $\dot V_3(t)$ can be expressed as
        \begin{align}\label{dotv3}
			&\dot V_3(t)=\sum_{k=1}^m\bigg\{\xi^T(t)(\tau_k^l R_{1k}+\tilde\tau_k R_{2k})\xi(t)+ z^T(t)W_{1k}z(t) \nonumber\\
			&\hspace{0em}+z^T(t-\tau_k^l)(W_{2k}-W_{1k})z(t-\tau_k^l) \nonumber\\
			&\hspace{0em}+z^T(t-\tau_k(t))(W_{3k}-W_{2k})z(t-\tau_k(t)) \nonumber \\
			&\hspace{0em}-z^T(t-\tau_k^u)W_{3k}z(t-\tau_k^u)  \nonumber \\
			&\hspace{0em}-\int_{t-\tau_k^l}^{t} \xi^T(s)(R_{1k}+\bar W_{1k})\xi(s) ds\nonumber
			\hspace{0em}- \int_{t-\tau_{k}(t)}^{t-\tau_k^l} \xi^T(s) \cdot \\
            &(R_{2k}+\bar W_{2k})\xi(s) -\int_{t-\tau_k^u}^{t-\tau_{k}(t)} \xi^T(s)(R_{2k}+\bar W_{3k})\xi(s) ds\bigg\}, 
		\end{align}
        where $2z^T(t)W\dot z(t)=\xi^T(t)\bar W \xi(t)$ with $\bar W=\bb 0 & W \\ W & 0\eb$.

        Based on the Lemma \ref{lem1}, $\dot V_3(t)$ can be bounded as follows
		\begin{align}\label{dotv3_ineq}
			\dot V_3(t)&\leq\sum_{k=1}^m\bigg\{\xi^T(t)(\tau_k^lR_{1k}+\tilde\tau_kR_{2k})\xi(t)+z^T(t)W_{1k}z(t) \nonumber  \\
			+&z^T(t-\tau_k^l)(W_{2k}-W_{1k})z(t-\tau_k^l)\nonumber\\
			+&z^T(t-\tau_k(t))(W_{3k}-W_{2k})z(t-\tau_k(t)) \nonumber \\
			-&z^T(t-\tau_k^u)W_{3k}z(t-\tau_k^u) 
			+\zeta^T(t)\{\tau_k^l\bar Y_{1k}{RW}_{1k}^{-1}\bar Y_{1k}^T\nonumber\\
			+&(\tau_{k}(t)-\tau_k^l)\bar Y_{2k}{RW}_{2k}^{-1}\bar Y_{2k}^T
			+(\tau_k^u-\tau_{k}(t))\bar Y_{3k}{RW}_{3k}^{-1}\bar Y_{3k}^T \nonumber \\
			+&{\rm {\bf Sym}}(\bar Y_{1k}\bar M_{1k}^T 
			+\bar Y_{2k}\bar M_{2k}^T+\bar Y_{3k}\bar M_{3k}^T )\}\zeta(t)\bigg\},
		\end{align}
        where $RW_{1k} = \mathcal{I}_N\otimes(R_{1k}+W_{1k})$, 
        $RW_{2k} = \mathcal{I}_N\otimes(R_{2k}+W_{2k})$, 
        $RW_{3k} = \mathcal{I}_N\otimes(R_{2k}+W_{3k})$, 
        which are positive definite by \eqref{thm_lmi}.

    \textbf{Derivation of $\dot V(t)$}    
    Summing \eqref{dotv1}, \eqref{dotv2}, and \eqref{dotv3_ineq}, and collecting all quadratic terms in $\zeta(t)$, an upper bound of the LKF is obtained as
    \begin{align}\label{dotv}
        \dot{V}(t)&<\zeta^T(t)\bigg\{\bar \Pi + \sum_{k=1}^m\bigg(\tau_k^l \bar Y_{1k}{RW}_{1k}^{-1}\bar Y_{1k}^T + (\tau_{k}(t)-\tau_k^l) \nonumber \\
         \times \bar Y_{2k}& {RW}_{2k}^{-1}\bar Y_{2k}^T +(\tau_k^u-\tau_{k}(t))\bar Y_{3k} {RW}_{3k}^{-1}\bar Y_{3k}^T\bigg)\bigg\}\zeta(t),
    \end{align}
    where $\bar \Pi$ collects all delay-independent quadratic terms.
    
    Here, $\Pi_3=\bb e_1\bar A^T-\sum_{k=1}^m e_{1+2m+k}\bar B_{k}^T & e_1\eb$ maps $\xi(t)$ into the $\zeta$-coordinate, and the right-hand side of \eqref{dotv} contains three terms of the form $c_k(t)\bar Y_{ik}M_k^{-1}\bar Y_{ik}^T$.
    
    By applying the Schur complement lemma,
    \beqn
        \bar \Pi+c\bar YM^{-1}\bar Y^T<0 \longleftrightarrow \bb \bar \Pi & c\bar Y \\ \ast & -cM\eb< 0,
    \eeqn
    \eqref{dotv} can be rewritten as
    \beq\label{pf_thm}
        \dot V(t)<\zeta^T(t)\Xi(\tau_k(t))\zeta(t),
    \eeq
    where $\Xi(\tau_k(t))$ is given by
    \beq\label{thm_xi}
        \Xi(\tau_k(t))=\bb \bar\Pi & \bar X_1 & \bar X_2 & \bar X_3 \\
            \ast & -\overline{RW}_1 & 0 & 0 \\
            \ast & 0 & \overline{RW}_2 & 0 \\
            \ast & \ast & \ast & \overline{RW}_3 \eb,
    \eeq
    \beqnset
        &\bar X_2 = \bb(\tau_1(t)-\tau_1^l)\bar Y_{21} \quad & \cdots & \quad (\tau_m(t)-\tau_m^l)\bar Y_{2m} \eb, \\
        &\bar X_3= \bb(\tau_1^u-\tau_1(t))\bar Y_{31} \quad & \cdots & \quad (\tau_m^u-\tau_m(t))\bar Y_{3m} \eb, \\
        &\overline{RW}_2= -\text{diag}\left\{ (\tau_1(t)-\tau_1^l)\Pi_{61}, \cdots, (\tau_m(t)-\tau_m^l)\Pi_{6m} \right\}, \\
        &\overline{RW}_3= -\text{diag}\left\{ (\tau_1^u-\tau_1(t))\Pi_{71}, \cdots, (\tau_m^u-\tau_m(t))\Pi_{7m} \right\}.
    \eeqnset
    
    Since $\Xi(\tau_k(t))$ is affine in each $\tau_k(t)$ over the interval $[\tau_k^l,\tau_k^u]$, it can be expressed as a convex combination of its values at the end points. Therefore, $\Xi(\tau_k(t))<0$ for all $\tau_k(t)\in [\tau_k^l, \tau_k^u]$ if and only if 
    \beqn
        \Xi(\tau_k^l)<0 \quad \text{and} \quad \Xi(\tau_k^u)<0,
    \eeqn
    which corresponds exactly to the LMI conditions in \eqref{thm_lmi}, ensuring $\dot{V}(t)<0$. This completes the proof. $\blacksquare$ 
	\end{pf}
\end{theorem}

\begin{remark}  
    A higher approximation order $N$ reduces conservatism by incorporating more information, but increases the LMI dimension by $\mathcal{O}(m^2n)$. This $N$ should be selected by balancing conservatism reduction against the available computational budget.
\end{remark}

\begin{remark}
    The zero equalities \eqref{zero_1}--\eqref{zero_3} introduce independent weighting matrices $W_{1k}$, $W_{2k}$, $W_{3k}$ for each sub-interval, enabling interval-wise tuning of the stability conditions. Furthermore, the LMI conditions do not require $W_{ik}$ to be positive definite, thereby enlarging the feasible search space and yielding less conservative results. 
   Unlike existing works that impose a common delay bound across all agents, the proposed criterion assigns 
    individual bounds $\tau_k^l$ and $\tau_k^u$ per agent, which reduces conservatism.
\end{remark}

Eliminating the zero equalities \eqref{zero_1}-\eqref{zero_3} in the LKF in Theorem \ref{thm}, following corollary can be obtained:

\begin{corollary} \label{cor}
	The multi-agent system in \eqref{xdot} with heterogeneous time delays $\tau_{k}(t)\in[\tau_k^1, \tau_k^u]$ achieves the consensus if there exist $P_{k} \in \mathbb R^{3F\times3F}>0,~Q_{1k},~Q_{2k}\in \mathbb R^{F\times F}>0$, $R_{1k},~R_{2k}\in \mathbb R^{2F\times2F}>0$, matrices $Y_{ikj}\in \mathbb R^{(1+(3+3N)m)F\times F}(i=1,2,3,~ j=1,\cdots,2N+1)$, such that the following LMIs hold for all $\tau_{k}(t)\in\{\tau_k^1, \tau_k^u\}$ and for all $k=\{1,\cdots,m\}$
	\beq\label{cor_lmi}
	\barr{c|c|ccc} 
	\tilde \Pi      & \bar X_{1}   & \bar X_{23} \\ \hline
	(*)         &-\overline{R}_{1} & 0                      \\ \hline
	(*)         & 0              & -\overline {R}_{2}  \earr,	
	\eeq
	where all parameters are the same as those defined in Theorem \ref{thm}, except for the following:
	\beqn\baged
	\overline {R}_{1} &= diag\{\tau_1^l{R}_{11},\cdots,\tau_m^l{R}_{1m}\},\\
	\overline {R}_{2} &= diag\{\tilde\tau_1{R}_{21},\cdots, \tilde\tau_m{R}_{2m}\},\\
	\tilde \Pi &=\sum_{k=1}^m\bigg\{{\bf Sym}(\Pi_{2k}P_k\Pi_{1k}^T)+e_1Q_{1k}e_1^T \\
	&\hspace{2em}+e_{1+k}(Q_{2k}-Q_{1k})e_{1+k}^T-e_{1+m+k}Q_{2k}e_{1+m+k}^T\\
	&\hspace{2em}+\Pi_{3}(\tau_k^l R_{1k}+\tilde\tau_kR_{2k})\Pi_{3}^T\\
	&\hspace{2em}+{\rm {\bf Sym}}(\bar Y_{1k}\bar M_{1k}^T+\bar Y_{2k}\bar M_{2k}^T+\bar Y_{3k}\bar M_{3k}^T)\bigg\}.
	\eaged\eeqn
\end{corollary}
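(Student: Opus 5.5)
The plan is to repeat the proof of Theorem~\ref{thm}, but with the zero equalities \eqref{zero_1}--\eqref{zero_3} left out. Formally this amounts to setting $W_{1k}=W_{2k}=W_{3k}=0$ throughout.

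\textbf{Step 1: the functional and the unchanged derivatives.} I would take the same Lyapunov--Krasovskii functional $V=V_1+V_2+V_3$ and the same augmented vector $\zeta(t)$. The computations for $V_1$ and $V_2$ do not involve the $W$'s, so \eqref{dotv1} and the expression for $\dot V_2$ carry over verbatim. For $V_3$ I would start directly from \eqref{v3} and split the last integral at $t-\tau_k(t)$ as in \eqref{v3_term}.

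\textbf{Step 2: bounding the integral terms.} Each of the three integrals, over $[t-\tau_k^l,t]$, $[t-\tau_k(t),t-\tau_k^l]$ and $[t-\tau_k^u,t-\tau_k(t)]$, is bounded by Lemma~\ref{lem1}. The weight is $R_{1k}$ on the first interval and $R_{2k}$ on the other two, with free matrices $\bar Y_{ik}$. The projections $\bar M_{ik}$ are identified exactly as in Theorem~\ref{thm}, since they depend only on $\zeta$ and the Legendre-type polynomials. This yields
\[
\dot V(t)\le \zeta^T(t)\Big\{\tilde\Pi+\sum_{k=1}^m\big(\tau_k^l\bar Y_{1k}\bar R_{1k}^{-1}\bar Y_{1k}^T+(\tau_k(t)-\tau_k^l)\bar Y_{2k}\bar R_{2k}^{-1}\bar Y_{2k}^T+(\tau_k^u-\tau_k(t))\bar Y_{3k}\bar R_{2k}^{-1}\bar Y_{3k}^T\big)\Big\}\zeta(t).
\]
Here $\bar R_{ik}=\mathcal I_N\otimes R_{ik}$. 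The positivity of $R_{1k},R_{2k}$ now replaces the side conditions $R+\bar W>0$.

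\textbf{Step 3: reduction to finitely many LMIs.} I would apply the Schur complement blockwise to convert the bracket into the block matrix in \eqref{cor_lmi}. The bracket is affine in each $\tau_k(t)$ once the Schur-complemented form is used; the $(\tau_k(t)-\tau_k^l)$ and $(\tau_k^u-\tau_k(t))$ blocks merge as in $\bar X_{23}$. By convexity, negativity at the vertices $\tau_k(t)\in\{\tau_k^l,\tau_k^u\}$ then implies negativity on the whole box.

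\textbf{Step 4: conclusion.} Negativity of the bracket gives $\dot V<0$ for $\zeta\neq0$. Hence $z(t)\to0$, which is consensus.

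\textbf{Main obstacle.} The delicate point is the same as in the theorem: justifying the affine/convexity argument. The coefficients $(\tau_k(t)-\tau_k^l)$ appear both in the off-diagonal $\bar X_{23}$ and in the diagonal $\overline R_2$, so the Schur-complemented matrix is not literally affine. I would handle this by keeping the three interval terms as separate Schur blocks, as in \eqref{thm_xi}, each scaled by its own coefficient. Equivalently, I would use the reciprocally convex observation that $c\,\bar Y M^{-1}\bar Y^T$ is linear in $c\ge0$ before taking the Schur complement. The quadratic form is then affine in $\tau_k(t)$, and vertex checking suffices, with the endpoint case $c=0$ treated by dropping the corresponding block.

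Since the corollary is the special case $W_{ik}=0$ of Theorem~\ref{thm}, an alternative shortcut is available. One substitutes $W_{ik}=0$ into \eqref{thm_lmi}; the side conditions become $R_{1k},R_{2k}>0$, which are assumed, and $\bar\Pi$ reduces to $\tilde\Pi$. I would present this as the main argument, provided the $\overline{RW}_1$ block is read with scaling $\tau_k^l$ as in \eqref{dotv3_ineq}.
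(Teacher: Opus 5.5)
Your Steps 1--4 are correct and follow the paper's route: the paper obtains the corollary by rerunning the proof of Theorem~\ref{thm} with the zero equalities \eqref{zero_1}--\eqref{zero_3} removed. Your vertex step is more careful than the paper's. You take the pre-Schur expression, which is affine in each $\tau_k(t)$, check it at the vertices, and apply the Schur complement only there. At a vertex the block with zero coefficient drops out, and the $k$-th block of $\bar X_{23}$ reduces to $\tilde\tau_k\bar Y_{3k}$ or $\tilde\tau_k\bar Y_{2k}$.

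One correction to your ``main obstacle'' paragraph: in the corollary $\overline R_2$ does not depend on $\tau_k(t)$, so \eqref{cor_lmi} is literally affine. An interior-point argument still fails, for a different reason. At an interior point the Schur complement contains $\tilde\tau_k^{-1}(a\bar Y_{2k}+b\bar Y_{3k})\bar R_{2k}^{-1}(a\bar Y_{2k}+b\bar Y_{3k})^T$, with $a=\tau_k(t)-\tau_k^l$ and $b=\tau_k^u-\tau_k(t)$. By convexity of $X\mapsto X\bar R_{2k}^{-1}X^T$, this is $\preceq a\bar Y_{2k}\bar R_{2k}^{-1}\bar Y_{2k}^T+b\bar Y_{3k}\bar R_{2k}^{-1}\bar Y_{3k}^T$, which is smaller than what you need.

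The gap is in the shortcut you say you would present as the main argument. Substituting $W_{ik}=0$ into \eqref{thm_lmi} does not produce \eqref{cor_lmi}, even after your correction of $\overline{RW}_1$.
\begin{itemize}
\item At $W_{ik}=0$ one has $\Pi_{6k}=\Pi_{7k}=\mathcal I_N\otimes R_{2k}$. Hence $\Pi_{5k}=\tilde\tau_k\,\mathcal I_N\otimes R_{2k}$ and $\overline{RW}_{23}=\mathrm{diag}\{\tilde\tau_k^2\,\mathcal I_N\otimes R_{2k}\}$, not $\overline R_2=\mathrm{diag}\{\tilde\tau_k R_{2k}\}$.
\item At a vertex this block gives the Schur term $\bar Y_{3k}(\mathcal I_N\otimes R_{2k})^{-1}\bar Y_{3k}^T$. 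What \eqref{dotv3_ineq} requires is $\tilde\tau_k\bar Y_{3k}(\mathcal I_N\otimes R_{2k})^{-1}\bar Y_{3k}^T$.
\item So the theorem's printed LMI is mis-scaled relative to its own derivation, and for $\tilde\tau_k>1$ it is weaker than what is needed. A black-box reduction to the theorem inherits this error.
\item Theorem~\ref{thm} as stated also requires $W_{ik}>0$, so $W_{ik}=0$ is not admissible. To use it you must note that the zero equalities hold for every symmetric $W_{ik}$ and positivity is never used, or perturb to $W_{ik}=\epsilon I$ and use strictness of the LMIs.
\end{itemize}
Once both blocks are re-read from \eqref{dotv3_ineq}, you are just redoing Steps 2--3. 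Present Steps 1--4 as the proof and drop the shortcut. Read $\overline R_1$ and $\overline R_2$ as $\mathrm{diag}\{\tau_k^l\,\mathcal I_N\otimes R_{1k}\}$ and $\mathrm{diag}\{\tilde\tau_k\,\mathcal I_N\otimes R_{2k}\}$, which your $\bar R_{ik}$ already does.
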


%% file: txts/sec4.tex
This section details two examples. Example 1 illustrates the efficiency of the proposed theorem by comparing it with early work. 
Example 2 reveals that the proposed theorem can reach a consensus under heterogeneous time-varying delays.
\begin{table}[t!]
    \newcolumntype{P}[1]{>{\centering\arraybackslash}p{#1}}
    \renewcommand{\arraystretch}{1.2}
	\caption{Maximum allowable delay $\tau^u$ when $\tau^l=0$ for example 1}
	\label{table1}
	\begin{center}
		\begin{tabular}
        {|P{0.3\columnwidth}|P{0.1\columnwidth}|P{0.1\columnwidth}|P{0.1\columnwidth}|P{0.1\columnwidth}|}
			\hline
			$\tau^l$ & 0 & 0.1  & 0.2 &0.3  \\	\hline
			\cite{savino2015conditions}& 0.416	& 0.436& 0.456  & 0.478 \\	\hline
			Cor \ref{cor}(N=1)&0.521 &0.655 & 0.775  &0.870 \\	\hline
			Cor \ref{cor}(N=2)& 0.521& 0.706&0.829 &0.922\\	\hline
			Thm\ref{thm}(N=1)&0.522 &0.656  &0.780 &0.876 \\ \hline
			Thm \ref{thm}(N=2)&0.522 & 0.730 & 0.863&0.957 \\	\hline			
		\end{tabular}
	\end{center}
\end{table}

\begin{example}
    For a practical example, consider the case of a second-order robot model used in \cite{michael2014experimental} with heterogeneous time-varying delays. The system description is as follows:
    \beq \label{ex1}
         \ddot p_k(t)+d\dot p_k(t) +sp_k(t) = u_k(t-\tau_{k}(t)), 
    \eeq
    where $p_k$ is the position state of $k$th agent. The parameters $d$ and $s$ are the damping and spring constant. 
    By setting $x_k(t) = col\{p_k(t), ~\dot p_k(t)\}$, entire MAS dynamics \eqref{xdot} is obtained, whose parameters are as follows:
    \beqn\baged
        A &= \bb 0 & 1 \\ -s & -d\eb, \quad  B= \bb 0 & 0 \\ 1 & 0\eb, \quad  K = \bb k & 0 \\ 0 & 0\eb,
    \eaged\eeqn
     where $d=1$, $s=0$, and $k=1$. These parameters and topology are selected in the paper \cite{savino2015conditions} for comparison. 
     Table \ref{table1} details the maximum allowable delay of \cite{savino2015conditions}, the proposed corollary, and theorem. 
     This demonstrates that the proposed criteria, reflecting the delay of each agent, have the larger maximum allowable delay than previous study considering the common bound of all agents. 
     By comparing the values of the Corollary \ref{cor} and Theorem \ref{thm}, the efficacy of zero equality condition is demonstrated. Furthermore, a larger $N$ guarantees a greater maximum allowable delay. Therefore, the proposed generalized consensus analysis allows users to balance computational complexity and delay tolerance according to their design objectives.
\end{example}

\begin{figure}
    \centering
    \includegraphics[width=\linewidth]{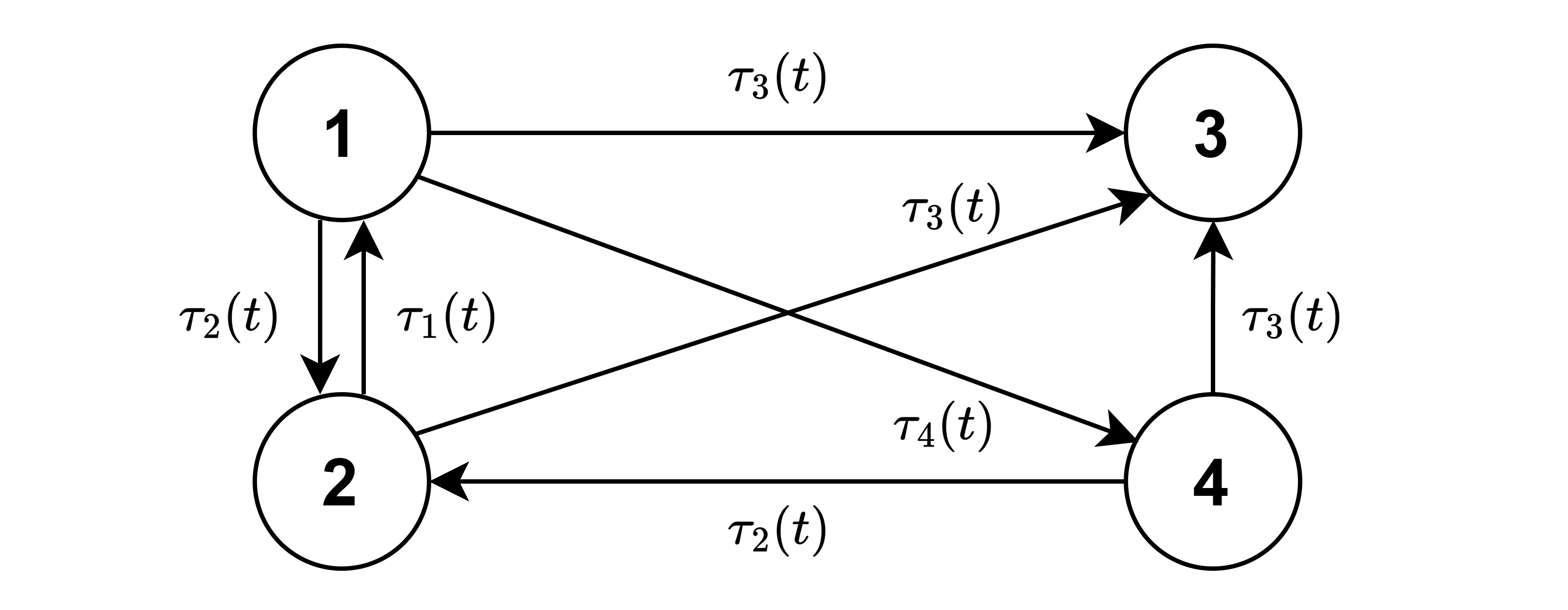}
    \caption{Communication topology with heterogeneous time-varying delay for Example 2}
    \label{fig1}
\end{figure}

\begin{figure}
    \centering
    \includegraphics[width=0.8\linewidth]{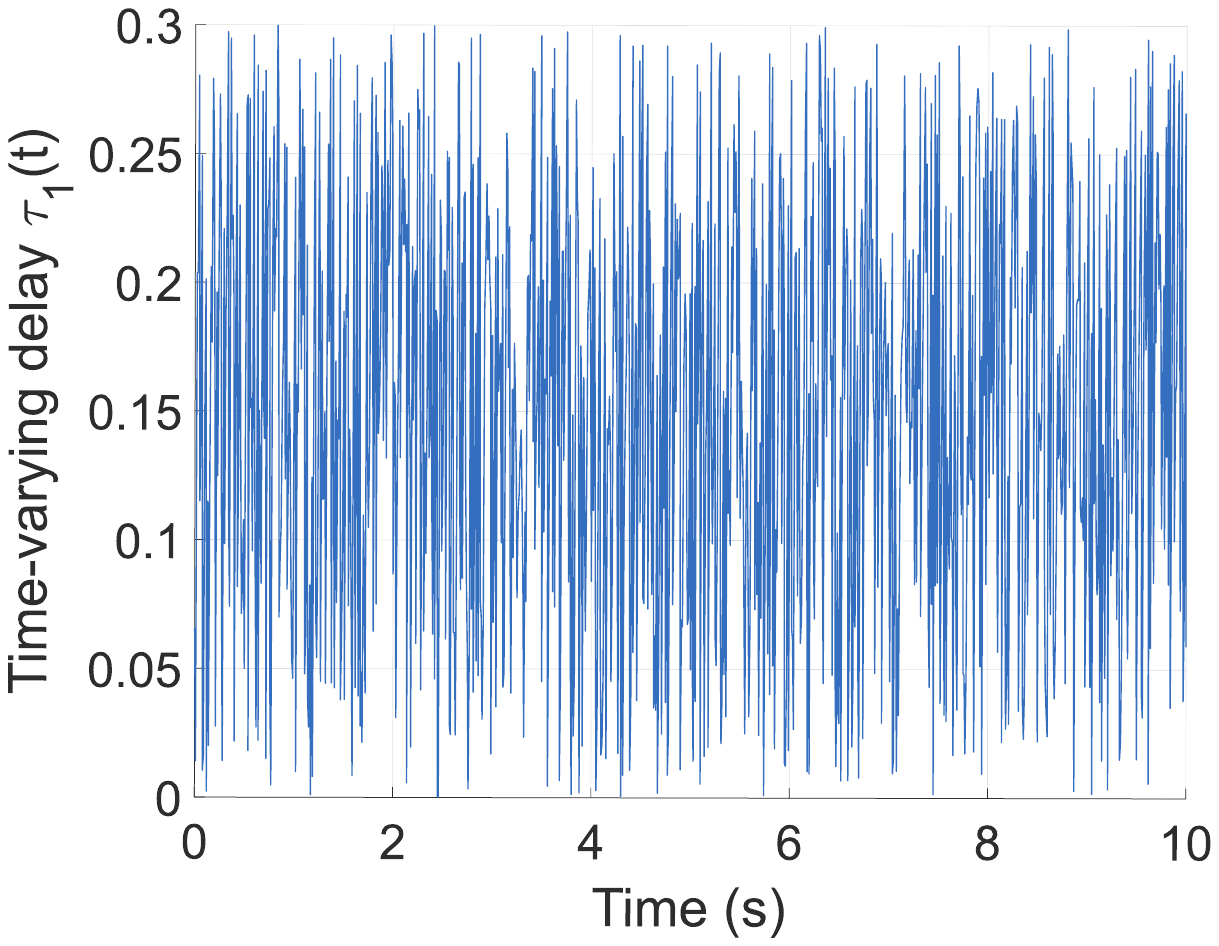}
    \caption{Time-varying delay $\tau_1(t)$}
    \label{fig2}
\end{figure}

\begin{figure}
    \centering
    \includegraphics[width=\linewidth]{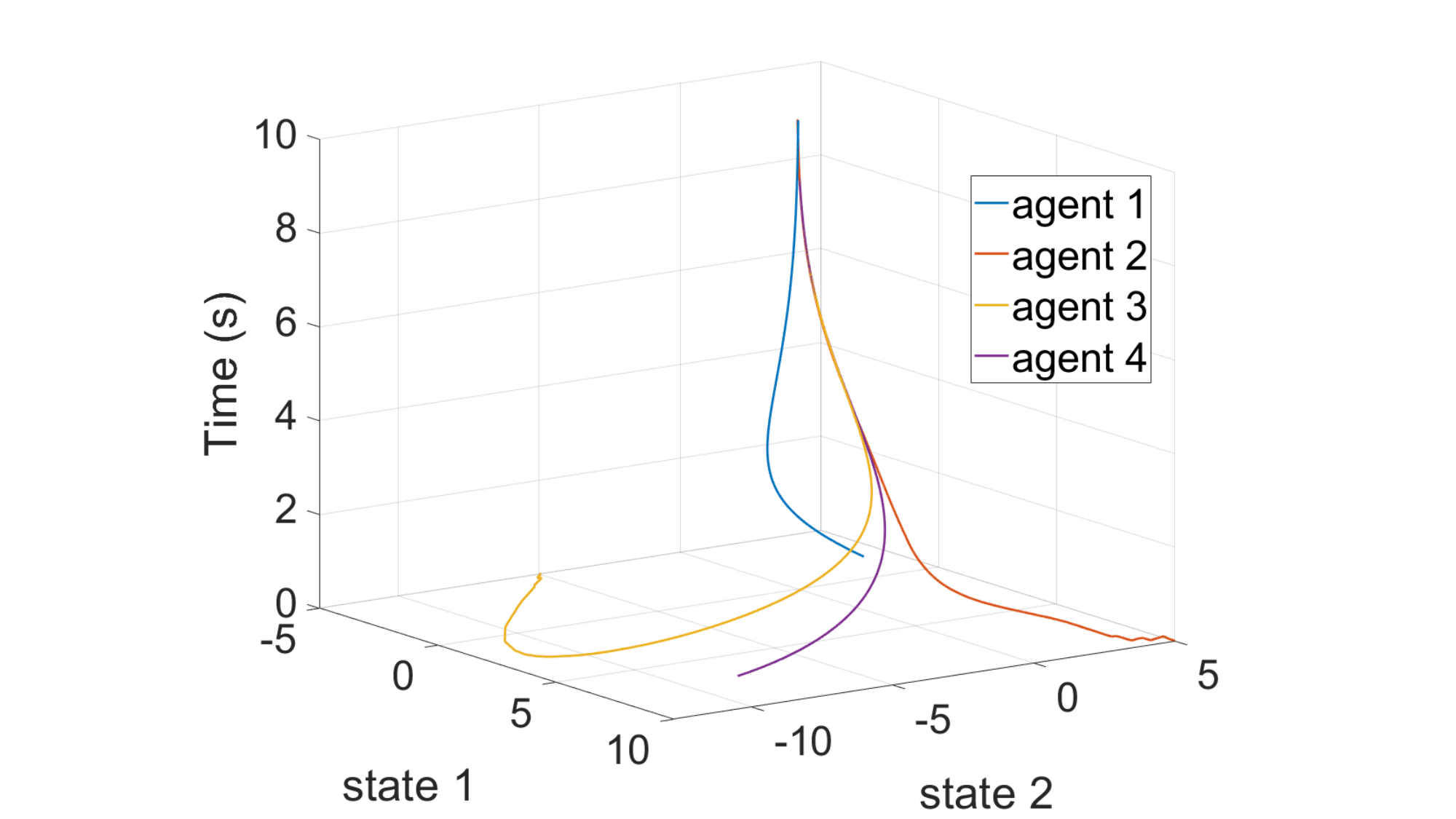}
    \caption{Trajectories of states for Example 2}
    \label{fig3}
\end{figure}

\begin{example}
    This example addresses the consensus problem for MAS with heterogeneous time-varying delays, with each agent having different bounds. 
	Consider the MAS \eqref{xdot} with following parameters: 
    \beqn\baged
        A&=\bb -0.5 & -0.6\\0.2 & -0.5 \eb, \; B= \bb 0.8 & -0.1 \\ -1.2 & 0.3 \eb, \;
        K=\bb -3 & 0.5 \\ 0.5 & -2 \eb, \\
        \tau_1^u&=0.3, ~\tau_2^u=0.2, ~\tau_3^u=0.2, ~\tau_4^u=0.1, ~\tau_i^l=0 ~\forall{i}.
    \eaged\eeqn

    The initial values are $x_{1}(0)=[-2~ 4]^T$, $x_{2}(0)=[10~ 5]^T$, $x_{3}(0)=[-5~ -5]^T$, and $x_{4}(0)=[7~ -8]^T$.
    Figure~\ref{fig1} shows the communication topology of the system 
    with heterogeneous time-varying delays for each agent. 
    Figure~\ref{fig2} depicts the time-varying delay of agent $1$, 
    while Fig.~\ref{fig3} presents the state trajectories of the MAS. 
    All agents asymptotically converge to a common consensus value of approximately $[0.028,~-0.024]$, confirming that the proposed criterion successfully guarantees consensus under heterogeneous time-varying delays.
\end{example}

%% file: reference.bib
@article{liu2021pseudo,
	title={Pseudo-predictor feedback control for multiagent systems with both state and input delays},
	author={Liu, Qingsong},
	journal={IEEE/CAA Journal of Automatica Sinica},
	volume={8},
	number={11},
	pages={1827--1836},
	year={2021},
	publisher={IEEE}
}

@article{you2023control,
	title={Control for Nonlinear Fuzzy Time-Delay Multiagent Systems: Two Kinds of Distributed Saturation-Constraint Impulsive Approach},
	author={You, Le and Jiang, Xiaowei and Li, Bo and Zhang, Xianhe and Yan, Huaicheng and Huang, Tingwen},
	journal={IEEE Transactions on Fuzzy Systems},
	volume={31},
	number={8},
	pages={2861--2870},
	year={2023},
	publisher={IEEE}
}

@article{zhang2023leader,
	title={Leader-following consensus of multi-agent systems with time delays by fully distributed protocols},
	author={Zhang, Kai and Zhou, Bin and Duan, Guang-Ren},
	journal={Systems \& Control Letters},
	volume={178},
	pages={105582},
	year={2023},
	publisher={Elsevier}
}

@article{tang2023edge,
	title={Edge-based self-triggering impulsive consensus on nonlinear multi-agent systems with proportional delay},
	author={Tang, Ze and Wang, Kun-Peng and Feng, Jianwen and Park, Ju H},
	journal={IEEE Transactions on Automation Science and Engineering},
	year={2023},
	publisher={IEEE}
}

@article{li2023delay,
	title={A delay classification-based approach to distributed consensus of nonlinear time-delay multiagent systems},
	author={Li, Kuo and Ahn, Choon Ki and Zheng, Wei Xing and Hua, Chang-Chun},
	journal={IEEE Transactions on Automatic Control},
	volume={68},
	number={11},
	pages={6990--6997},
	year={2023},
	publisher={IEEE}
}

@article{deng2020dynamic,
	title={A dynamic periodic event-triggered approach to consensus of heterogeneous linear multiagent systems with time-varying communication delays},
	author={Deng, Chao and Che, Wei-Wei and Wu, Zheng-Guang},
	journal={IEEE transactions on cybernetics},
	volume={51},
	number={4},
	pages={1812--1821},
	year={2020},
	publisher={IEEE}
}

@article{afrazi2025enhanced,
	title={Enhanced Density-driven Control of Multi-agent UAV Systems for Efficient Victim Detection in Large-scale Disaster Scenarios},
	author={Afrazi, Mohammad and Seo, Sungjun and Lee, Kooktae},
	journal={International Journal of Control, Automation and Systems},
	volume={23},
	number={9},
	pages={2728--2742},
	year={2025},
	publisher={Springer}
}

@article{lee2025graph,
	title={Graph Network Centralization via Asymmetric Edge Weight Allocation: Laplacian Conditioning and Multi-UAV System Application},
	author={Lee, Hye Jin and Na, Hyeon-Woo and Park, PooGyeon},
	journal={IEEE Transactions on Network Science and Engineering},
	year={2025},
	publisher={IEEE}
}

@article{hua2025multi,
	title={Multi-agent reinforcement learning for connected and automated vehicles control: Recent advancements and future prospects},
	author={Hua, Min and Qi, Xinda and Chen, Dong and Jiang, Kun and Liu, Zemin Eitan and Sun, Hongyu and Zhou, Quan and Xu, Hongming},
	journal={IEEE Transactions on Automation Science and Engineering},
	year={2025},
	publisher={IEEE}
}

@article{gao2022safety,
	title={Safety-critical model-free control for multi-target tracking of USVs with collision avoidance},
	author={Gao, Shengnan and Peng, Zhouhua and Wang, Haoliang and Liu, Lu and Wang, Dan},
	journal={IEEE/CAA Journal of Automatica Sinica},
	volume={9},
	number={7},
	pages={1323--1326},
	year={2022},
	publisher={IEEE}
}

@article{li2024multi,
	title={Multi-mode filter target tracking method for mobile robot using multi-agent reinforcement learning},
	author={Li, Xiaofeng and Ren, Jie and Li, Yunbo},
	journal={Engineering Applications of Artificial Intelligence},
	volume={127},
	pages={107398},
	year={2024},
	publisher={Elsevier}
}

@article{feng2024adaptive,
	title={Adaptive event-triggered time-varying output group formation containment control of heterogeneous multiagent systems},
	author={Feng, Lihong and Huang, Bonan and Sun, Jiayue and Sun, Qiuye and Xie, Xiangpeng},
	journal={IEEE/CAA Journal of Automatica Sinica},
	volume={11},
	number={6},
	pages={1398--1409},
	year={2024},
	publisher={IEEE}
}

@article{yang2024nonsingular,
	title={Nonsingular fixed-time consensus tracking for heterogeneous multi-agent systems with external disturbances and actuator faults},
	author={Yang, Pu and Ding, Yu and Feng, Ke-Jia and Shen, Zi-Wei},
	journal={International Journal of Control, Automation and Systems},
	volume={22},
	number={3},
	pages={840--850},
	year={2024},
	publisher={Springer}
}

@article{xue2023input,
	title={Input-to-state consensus of nonlinear positive multi-agent systems under state feedback and impulsive control},
	author={Xue, Yi-Qing and Zhao, Ping},
	journal={International Journal of Control, Automation and Systems},
	volume={21},
	number={7},
	pages={2099--2111},
	year={2023},
	publisher={Springer}
}

@article{xu2024event,
	title={Event-triggered leader-following consensus control of multiagent systems against DoS attacks},
	author={Xu, Fei and Ruan, Xiaoli and Pan, Xiong},
	journal={International Journal of Control, Automation and Systems},
	volume={22},
	number={11},
	pages={3424--3433},
	year={2024},
	publisher={Springer}
}

@article{ye2025adaptive,
	title={Adaptive descriptor sliding-mode observer-based dynamic event-triggered consensus of multiagent systems against actuator and sensor faults},
	author={Ye, Zhengyu and Jiang, Bin and Yu, Ziquan and Cheng, Yuehua},
	journal={IEEE Transactions on Cybernetics},
	year={2025},
	publisher={IEEE}
}

@article{hu2024event,
	title={Event-triggered consensus of multiagent systems with prescribed performance},
	author={Hu, Wenfeng and Hou, Yahui and Chen, Zhiyong and Yang, Chunhua and Gui, Weihua},
	journal={IEEE Transactions on Automatic Control},
	volume={69},
	number={8},
	pages={5462--5469},
	year={2024},
	publisher={IEEE}
}

@article{wen2024output,
	title={Output feedback consensus for high-order stochastic multi-agent systems with unknown time-varying delays},
	author={Wen, Baoyu and Huang, Jiangshuai},
	journal={International Journal of Control, Automation and Systems},
	volume={22},
	number={9},
	pages={2823--2832},
	year={2024},
	publisher={Springer}
}

@article{vu2024matrix,
	title={Matrix-scaled consensus of network systems with uniform time-delays},
	author={Vu, Hoang Huy and Nguyen, Quyen Ngoc and Trinh, Minh Hoang and Van Pham, Tuynh},
	journal={International Journal of Control, Automation and Systems},
	volume={22},
	number={9},
	pages={2783--2791},
	year={2024},
	publisher={Springer}
}

@article{savino2015conditions,
	title={Conditions for consensus of multi-agent systems with time-delays and uncertain switching topology},
	author={Savino, Heitor J and dos Santos, Carlos RP and Souza, Fernando O and Pimenta, Luciano CA and De Oliveira, Maur{\'\i}cio and Palhares, Reinaldo M},
	journal={IEEE Transactions on Industrial Electronics},
	volume={63},
	number={2},
	pages={1258--1267},
	year={2015},
	publisher={IEEE}
}

@article{wu2024sampled,
	title={A sampled-data-based secure control approach for networked control systems under random DoS attacks},
	author={Wu, Jiancun and Peng, Chen and Zhang, Jin and Tian, Engang},
	journal={IEEE Transactions on Cybernetics},
	volume={54},
	number={8},
	pages={4841--4851},
	year={2024},
	publisher={IEEE}
}

@article{zeng2023weighted,
	title={Weighted average consensus of second-order multi-agent systems with various intelligence levels via sampled-data control},
	author={Zeng, Jinhui and Wang, Pin and Lan, Zheng and Luo, Xueming and Li, Jianghong and Hu, Jiaxi},
	journal={International Journal of Control, Automation and Systems},
	volume={21},
	number={5},
	pages={1560--1569},
	year={2023},
	publisher={Springer}
}

@article{zhao2024observer,
	title={Observer-based sampled-data adaptive tracking control for heterogeneous nonlinear multi-agent systems under denial-of-service attacks},
	author={Zhao, Ning and Zhang, Huiyan and Shi, Peng},
	journal={IEEE Transactions on Automation Science and Engineering},
	volume={22},
	pages={4771--4779},
	year={2024},
	publisher={IEEE}
}

@article{lee2020novel,
	title={A novel generalized integral inequality based on free matrices for stability analysis of time-varying delay systems},
	author={Lee, Jun Hui and Park, In Seok and Park, Poogyeon},
	journal={IEEE Access},
	volume={8},
	pages={179772--179777},
	year={2020},
	publisher={IEEE}
}

@article{sun2009consensus,
	title={Consensus of multi-agent systems in directed networks with nonuniform time-varying delays},
	author={Sun, Yuan Gong and Wang, Long},
	journal={IEEE Transactions on Automatic Control},
	volume={54},
	number={7},
	pages={1607--1613},
	year={2009},
	publisher={IEEE}
}

@inproceedings{michael2014experimental,
	title={An experimental study of time scales and stability in networked multi-robot systems},
	author={Michael, Nathan and Schwager, Mac and Kumar, Vijay and Rus, Daniela},
	booktitle={Experimental robotics},
	pages={631--643},
	year={2014},
	organization={Springer}
}

@article{jiang2022multi,
	title={Multi-agent consensus with heterogeneous time-varying input and communication delays in digraphs},
	author={Jiang, Wei and Liu, Kun and Charalambous, Themistoklis},
	journal={Automatica},
	volume={135},
	pages={109950},
	year={2022},
	publisher={Elsevier}
}

@article{li2024hybrid,
	title={A hybrid time/event-triggered interaction framework for multi-agent consensus with relative measurements},
	author={Li, Xianwei and Tang, Yang and Zou, Yuanyuan and Li, Shaoyuan and Zheng, Wei Xing},
	journal={Automatica},
	volume={159},
	pages={111369},
	year={2024},
	publisher={Elsevier}
}
